\newtheorem{definition}{Definition}[section]
\newtheorem{lemma}[definition]{Lemma}
\newtheorem{remark}[definition]{Remark}
\newtheorem{theorem}[definition]{Theorem}
\newtheorem{corollary}[definition]{Corollary}
\newcommand{\lp}{\left(}
\newcommand{\rp}{\right)}
\newcommand{\lc}{\left\{}
\newcommand{\rc}{\right\}}
\newcommand{\der}{\partial}
\newcommand{\bra}{\langle}
\newcommand{\ket}{\rangle}
\newcommand{\R}{\mathbb{R}}      %Numeros reales
\newcommand{\N}{\mathbb{N}}      %Numeros naturales
\newcommand{\T}{\mathbb{T}}
\newcommand{\Flder}{\rightarrow}
\newcommand{\proa}{A^*G \mbox{$\;$}_{\tau^*} \kern-3pt\times_\alpha
G \mbox{$\;$}_\beta \kern-3pt\times_{\tau^*} A^*G}
\newcommand{\ldb}{[\![}
\newcommand{\rdb}{]\!]}
\begin{document}

\title[A Fractional Approach for Modelling Dissipative Mechanical Systems]{A Fractional Variational Approach for Modelling Dissipative Mechanical Systems: Continuous and Discrete Settings}

\keywords{Variational analysis, Mechanical systems, Lagrangian mechanics, Damping, Fractional derivatives, Discretisation, Variational integrators}

\maketitle

\begin{center}
{\bf \large Fernando Jim\'enez and Sina Ober-Bl\"obaum}
\end{center}

\begin{center}
Department of Engineering Science, University of Oxford\\
 Parks Road, Oxford OXI 3PJ, UK\\
e-mail: fernando.jimenez@eng.ox.ac.uk\\
$\,\,\,\,$e-mail: sina.ober-blobaum@eng.ox.ac.uk
\end{center}

\begin{abstract}                % Abstract of not more than 250 words.
Employing a phase space which includes the (Riemann-Liouville) fractional derivative of curves evolving on real space, we develop a restricted variational principle  for Lagrangian systems yielding the so-called restricted fractional Euler-Lagrange equations (both in the continuous and discrete settings), which, as we show, are invariant under linear change of variables. This  principle relies on a particular restriction upon the admissible variation of the curves. In the case of the half-derivative and mechanical Lagrangians, i.e. kinetic minus potential energy, the restricted fractional Euler-Lagrange equations model a dissipative system in both directions of time, summing up to a set of equations that is invariant under time reversal. Finally, we show that the discrete equations are a meaningful discretisation of the continuous ones.
\end{abstract}

\section{Introduction}
Variational principles are powerful tools for the modelling and simulation of mechanical %and electrical 
systems. As it is well-known, the fulfilment of a variational principle leads to the Euler-Lagrange equations of motion describing the dynamics of such systems. On the other hand, a variational discretisation directly yields unified numerical schemes with powerful structure-preserving properties. Using these variational or symplectic integrators (VIs, \cite{MaWe01}) (integration schemes which themselves satisfy a variational principle), the geometric structure is not destroyed and characteristic properties of the dynamics are inherited by the discrete approximation such as preservation of energy and momentum maps in the presence of symmetries. 
Continuous and discrete Lagrangian mechanics have been primarily used to model and to simulate conservative systems, 
for which the excellent long-time energy behaviour of VIs is explained by the fact that the discrete solution is the exact solution of a nearby Lagrangian system, which is a result obtained by backward error analysis (BEA) techniques (\cite{HaLuWa}).
However, because most classical processes observed in the engineering world are non-conservative, it is important to be able to apply the power of variational methods to these cases. 

There have been several attempts to provide a general method of dealing with non-conservative forces in classical mechanics (\cite{Riewe1996}; \cite{Goldstein1980}, pg. 21; \cite{CELEGHINI1989}; \cite{Bloch,Kane00,ObJuMa10}). However, for all approaches in the literature, either the physical meaning of the induced quantities is not physically menaningful or the proposed principles are not purely variational in the ususal sense. 

%In this case two scalar functions are used to specify the
%equation of motion, which is no longer an action principle. 
%Another approach is to model mass-damper systems using a Lagrangian of the form 
%\[
%L = \frac{m}{2} \dot{x}^2 e^{(\gamma/m)t},
%\]
%which leads to the Euler-Lagrange equation
%\[
%e^{(\gamma/m)t}(m\ddot{x}+\gamma\dot{x})=0.
%\]
%The desired equation of motion is obtained if the factor $e^{(\gamma/m)t}$ is `ignored' and set equal to $1$. However, the associated momentum and Hamiltonian do not appear to be physically meaningful. 
%Another technique is the introduction of auxiliary
%coordinates that describes a reverse-time system with negative friction \cite{CELEGHINI1989}. The resulting Hamiltonian leads to extraneous solutions that must be suppressed, and the physical meaning of the momenta is unclear.
%Furthermore, the Lagrange-d'Alembert principle (e.g.~\cite{Bloch}) can be used, where the variaton of the action is set equal to the work done by the external forces when the system undergoes an arbitrary virtual displacement. This results in the correct equations of motions and meaningful variational integrators for forced systems in the discrete setting (\cite{Kane00,ObJuMa10}), however, the principle is not variational in the usual sense (\cite{Bloch}).

A promising purely variational approach was formulated in \cite{Riewe1996}: it is shown that a simple system with linear damping can be modelled using a variational principle involving fractional derivatives and a Euler-Lagrange equation of the form
\[
\sum_{\alpha} (-1)^{\alpha} \frac{d^{\alpha}}{dt^\alpha} \left( \frac{\partial  L}{\partial q^{(\alpha)}} \right)=0,
\]
where $\alpha \in \mathbb{Q}$ the set of rational numbers. 
%It is understood that  $\frac{\partial {\cal L}}{\partial q_l^{(\alpha)}} =0$, for values of $\alpha$, where the Lagrangian has no dependence on the $\alpha$th derivative of the corresponding coordinate $q_l$. 
%It is clear that the coefficients in (\ref{eq:FELE}) may be complex, requiring complex coefficients in $\cal L$. 
Several definitions of a fractional derivative have been proposed (for an overview see e.g.~\cite{TheBook} and \cite{Agrawal2002}), among which we pick Riemann-Liouville's. Most importantly to mention is that the fractional derivative of a function is not determined by the value of the function at a single value $t$, but depends on the values over an entire interval.

While Riewe introduced fractional Euler-Lagrange equations for dissipative systems based on fractional Lagrangian functionals using left and right Riemann-Liouville derivatives, $D_{-}^\alpha$ and $D_{+}^\alpha$, it is shown in \cite{Cresson2012975} that critical points of a fractional Lagrangian functional and solutions of a dissipative equation are not equivalent. The reason is that due to fractional integration by parts a mixing of left and right fractional derivatives occurs and in general
\[
D_{-}^{1/2} D_{+}^{1/2} x \not= \frac{dx}{dt}.
\]
To overcome this problem, in \cite{Cresson2012975} so called \emph{asymmetric fractional embeddings} are introduced. By artificially doubling the state space and by restricting the set of variations, fractional Euler-Lagrange equations are derived whose solutions are equivalent to solutions of dissipative equations. However, the geometric structure of such enlarged phase space is not provided and the meaning of the restricted variations is not clear and apparently {\it ad hoc} chosen. %Besides, this approach cannot be applied to curves evolving on a general manifold, since the fractional embedding for Lagrangian systems implies the sum of curves.

In this paper, we also consider an enlarged fractional phase space and clarify its geometric structure in terms of vector spaces and bundle products (section \ref{FracTSpace}). We develop a restricted variational principle and obtain a sufficient condition for the extremals providing the so-called {\it restricted fractional Euler-Lagrange equations} (theorem \ref{ConsFraEL}) (see \cite{Agrawal2002} for other approaches to obtain fractional Euler-Lagrange equations). This suffcient condition relies on a restriction upon the admissible variations, allowing different varied curves for the extra variables with the same variations (definition \ref{ConsVariations}).  We show as well that the restricted fractional Euler-Lagrange equations are invariant under linear change of variables (theorem \ref{InvarianceEqs}). In the $\alpha=1/2$ case and considering mechanical Lagrangins, i.e. kinetic minus potential energy, the obtained equations describe a dissipative mechanical system, whereas the dynamics of the extra variables can be interpreted as the resversed time dynamics (corollary \ref{Coro}). As a whole, the system of equations is invariant under time reversal. In  section  \ref{Discrete}, we develop the discrete counterpart of the previous results. We obtain the {\it restricted discrete fractional Euler-Lagrange equations} (theorem \ref{DiscreteEquations}) and prove a new property for double discrete fractional derivatives (lemma \ref{LemmaDisc}) which allows to derive a meaningful discretisation of the dissipative mechanical equations in the $\alpha=1/2$ mechanical case.

\section{Riemann-Liouville fractional derivatives}

Let $\alpha\in[0,1]\subset\R$ and $f:[a,b]\subset\R\Flder\R$ a smooth function. The Riemann-Liouville fractional derivatives are defined by
\begin{subequations}\label{RL}
\begin{align}
&D^{\alpha}_{-}f(t)=\,\,\,\,\,\frac{1}{\Gamma(1-\alpha)}\frac{d}{dt}\int_a^t(t-\tau)^{-\alpha}f(\tau)d\tau,\\
&D^{\alpha}_{+}f(t)=-\frac{1}{\Gamma(1-\alpha)}\frac{d}{dt}\int_t^b(\tau-t)^{-\alpha}f(\tau)d\tau,
\end{align}
\end{subequations}
for $t\in$\,[a,b], and $\Gamma(z)$ the gamma function (\cite{TheBook}). As it is well-known, the fractional derivatives are non-local operators: in the sequel $-$ and $+$ will denote the retarded and advanced cases, respectively. Let us consider two smooth functions $f,g$. The fractional integration by parts rule is given by
\begin{equation}\label{IntegrationByParts}
\int_a^bf(t)D^{\alpha}_{\sigma}g(t)dt=\int_a^b\lp D^{\alpha}_{-\sigma}f(t)\rp g(t)dt,
\end{equation}
where $\sigma$ stands either for $-$ or $+$. An important feature of fractional integrals is, when $\alpha=1/2$:
\begin{equation}\label{DoubleFracInt} 
D^{1/2}_{-}D^{1/2}_{-}f(t)=\frac{d}{dt}f(t),\,\, D^{1/2}_{+}D^{1/2}_{+}f(t)=-\frac{d}{dt}f(t).
\end{equation}
See \cite{TheBook} for more details. According to the above definitions, the fractional derivatives are $\R$-valued. %To see this, it is enough to note that $f(t)$ is $\R-$valued, as well as $(t-\tau)^{-\alpha}$ for $t>\tau$ and $(\tau-t)^{-\alpha}$ for $\tau>t$.

\section{Fractional phase space}\label{FracTSpace}

%\subsection{Curves and fractional tangent space} 

Consider a smooth curve $\gamma:[a,b]\subset\R\Flder\R^d$ for $d\in\N$. The local representation of the curve is given by $\gamma(t)=(x^1(t),...,x^d(t))$, $t\in[a,b]$. For the set of all smooth curves $C^{\infty}([a,b],\R^d)$, let us define the {\it fractional tangent vector} of the curve $\gamma$ by means of the following application
\[
\begin{split}
&X_{\sigma}^{\alpha}: C^{\infty}([a,b],\R^d) \Flder\quad\R^d,\\
&\quad\quad\quad\quad\quad\,\,\,\gamma\quad\quad\,\mapsto\,\,\,\, X_{\sigma}^{\alpha}\gamma,
\end{split}
\]
where the fractional tangent vector is defined by
\[%\label{FracTangVect}
X_{\sigma}^{\alpha}\gamma:=D^{\alpha}_{\sigma}\gamma(t)=(D^{\alpha}_{\sigma}x^1(t),...,D^{\alpha}_{\sigma}x^d(t)),\, t\in[a,b].
\]
In the sequel we shall omit the $t$-dependence in curves and fractional tangent vectors.

%, as well as denote them (indistinctly) by their coordinates, i.e. $(x^i)$ and $(D^{\alpha}_{\sigma}x^i)$, respectively.

\begin{definition}
 We define the {\rm fractional tangent space} of the curve $\gamma$ as
{\rm
\[
V^{\alpha}_{\sigma}\R^d=\lc X_{\sigma}^{\alpha}\gamma\,|\, \mbox{for} \,\,\gamma\in C^{\infty}([a,b],\R^d)\,,\,t\in [a,b]\rc.
\]
}
\end{definition}
Taking into account the vector structure of $C^{\infty}([a,b],\R^d)$, which is defined pointwise for $t\in[a,b]$, it is easy to prove that   $V^{\alpha}_{\sigma}\R^d$ is a $d$-dimensional vector space.
%\begin{proposition}\label{FracVectorSpace}
%{\it $V^{\alpha}_{\sigma}\R^d$ is a vector space with {\rm dim} $=d$.}
%\end{proposition}
%{\it Proof.} The vector structure for curves is defined pointwise, i.e. for $\gamma, \delta\in C^{\infty}([a,b],\R^d)$ and  $\lambda\in\R$
%\[
%\begin{split}
%(\gamma+\delta)(t)&=\gamma(t)+\delta(t),\\
%(\lambda\gamma)(t)&=\lambda\gamma(t).
%\end{split}
%\]
%Noting that $D^{\alpha}_{\sigma}$ is a linear operator according to \eqref{RL}, the vector structure for $V^{\alpha}_{\sigma}\R^d$ is defined by 
%\[
%\begin{split}
%X^{\alpha}_{\sigma}\gamma+X^{\alpha}_{\sigma}\delta&:=X^{\alpha}_{\sigma}(\gamma+\delta),\\
%\lambda\cdot X^{\alpha}_{\sigma}\gamma&:=X^{\alpha}_{\sigma}(\lambda\gamma).
%\end{split}
%\]
%Considering that for any curve $\gamma$ the canonical basis of $\R^d$ is a linearly independent generating system with $d$ elements for $X^{\alpha}_{\sigma}\gamma$ with coordinates $D^{\alpha}_{\sigma}x^i$, we conclude that $V^{\alpha}_{\sigma}\R^d$ is a vector space with dim $=d$.  $\square$
%\vspace{0.1cm}

With these elements, we define the {\it fractional tangent bundle} for curves $\gamma$ as follows.

\begin{definition}\label{FracTB}
{\it Define the {\rm fractional tangent bundle} for $\gamma$ and $t\in [a,b]$ by
{\rm\[
T^{\alpha}_{\sigma}\R^d=\lc (\gamma,X^{\alpha}_{\sigma}\gamma)\in\R^d\times V^{\alpha}_{\sigma}\R^d\,|\,\gamma\in C^{\infty}([a,b],\R^d)\rc.
\]}
Whe shall consider $\R^d$ as the base space and $V^{\alpha}_{\sigma}\R^d$ the fiber.  Denoting $\mathcal{V}^{\alpha}_{\tiny{ (\gamma,\sigma)}}:=(\gamma,X_{\sigma}^{\alpha}\gamma)\in T^{\alpha}_{\sigma}\R^d$ its local representation  is given by
$\mathcal{V}^{\alpha}_{\tiny (\gamma,\sigma)}=(x^1,...,x^d, D^{\alpha}_{\sigma}x^1,..$\\$..,D^{\alpha}_{\sigma}x^d).$
The bundle projection $\tau_{\sigma}^{\alpha}:T_{\sigma}^{\alpha}\R^d\Flder\R^d$ is locally given by
\[
\tau_{\sigma}^{\alpha}(\mathcal{V}^{\alpha}_{(\gamma,\sigma)})=(x^1,...,x^d).
\]}
\end{definition}
According to this definition, we see that $(\tau_{\sigma}^{\alpha})^{-1}(\R^d)=V^{\alpha}_{\sigma}\R^d$. It is easy to see that at each time $t$ this  leads to the  local isomorphism $T^{\alpha}_{\sigma}\R^d\cong\R^d\times\R^d$.  As it is apparent, we have constructed the fractional tangent bundle in a similar way as the usual tangent bundle $T\R^d$ for curves in $\R^d$ (\cite{AbMa}).

%\subsection{Double fractional tangent bundles}

Consider now two smooth curves $\gamma_{x}, \gamma_{y}:[a,b]\subset\R\Flder\R^d$, forming the new curve $\tilde\gamma:[a,b]\subset\R\Flder\R^d\times\R^d$ by $\tilde\gamma=(\gamma_{x},\gamma_{y})$. Locally it is expressed by
$
\tilde\gamma=(x^1,...,x^d,y^1,...,y^d).
$
%In the sequel we shall make the association of labels  $-\equiv x$, $+\equiv y$ in the fractional derivatives  and fractional tangent spaces for reasons that shall be apparent.
\begin{definition}\label{DoubleFracTB}
Define the {\rm double fractional tangent bundle} for curves $\tilde\gamma$ and $t\in[a,b ]$ by
{\rm\[
\begin{split}
\T^{\alpha}\R^d=\{\lp(\gamma_{x},\gamma_{y}), (X^{\alpha}_{-}\gamma_{x},X^{\alpha}_{+}\gamma_{y})\rp
&\in (\R^d\times\R^d)\times (V^{\alpha}_{-}\R^d\times V^{\alpha}_{+}\R^d)\,\\&\mbox{such that}
\quad\gamma_{x},\gamma_{y}\in C^{\infty}([a,b],\R^d)\},
\end{split}
\]}
where now $\R^d\times\R^d$ is the base space and $V^{\alpha}_{-}\R^d\times V^{\alpha}_{+}\R^d$ the fiber. Denoting $\mathcal{V}^{\alpha}_{\tilde\gamma}:=\lp(\gamma_{x},\gamma_{y}), (X^{\alpha}_{-}\gamma_{x},X^{\alpha}_{+}\gamma_{y})\rp\in\T^{\alpha}\R^d$, its local representation is given by
$
\mathcal{V}^{\alpha}_{\tilde\gamma}=(x^i,y^i,D^{\alpha}_{-}x^i,D^{\alpha}_{+}y^i),$ $i=1,...,d$;
while the bundle projection $\tau^{\alpha}: \T^{\alpha}\R^d\Flder \R^d\times\R^d$ is 
\[
\tau^{\alpha}(\mathcal{V}^{\alpha}_{\tilde\gamma})=(x,y).
\]
\end{definition}
From now on we shall drop the $i$ superindex for sake of simplicity.
\begin{remark}
The vector structure of $(\tau^{\alpha})^{-1}(\R^d\times\R^d)=V^{\alpha}_{-}\R^d\times V^{\alpha}_{+}\R^d$ follows directly from the vector structure of each $V^{\alpha}_{\sigma}\R^d$ and its Cartesian product. Moreover, we have that {\rm dim}$(V^{\alpha}_{-}\R^d\times V^{\alpha}_{+}\R^d)=2d$. Locally we have the isomorphism
\[
\T^{\alpha}\R^d\cong T^{\alpha}_{-}\R^d\times  T^{\alpha}_{+}\R^d\cong \R^d\times\R^d\times\R^d\times\R^d.
\]
\end{remark}
Analogously to definition \ref{DoubleFracTB}, we can define the {\it double tangent bundle} by
\[
\begin{split}
\T\R^d=\{\lp(\gamma_{x},\gamma_{y}), (X_{-}\gamma_{x},X_{+}\gamma_{y})\rp\in &(\R^d\times\R^d)\times (V_{-}\R^d\times V_{+}\R^d)\,\\&\mbox{such that}\quad
\gamma_{x},\gamma_{y}\in C^{\infty}([a,b],\R^d)\},
\end{split}
\]
where now we are considering the usual tangent bundle of $\R^d$ instead of the fractional one. For a curve $\gamma\in C^{\infty}([a,b],\R^d)$, a vector $X_{\sigma}\gamma\in V_{\sigma}\R^d$ shall be  defined by its time derivative, i.e.  $X_{\sigma}\gamma:=\dot\gamma$. Defining $\mathcal{V}_{\tilde\gamma}:=\lp(\gamma_{x},\gamma_{y}), (X_{-}\gamma_{x},X_{+}\gamma_{y})\rp$, we have the local expression
\[
\mathcal{V}_{\tilde\gamma}=(x,y,\dot x,\dot y).
\]
The bundle projection is $\tau:\T\R^d\Flder\R^d\times\R^d$, $\tau(\mathcal{V}_{\tilde\gamma})=(x,y)$.
\begin{remark}
{\it
 $V_{\sigma}\R^d$  and $V_{-}\R^d\times V_{+}\R^d$ are vector spaces. Moreover, locally
\[
\T\R^d\cong T_{-}\R^d\times  T_{+}\R^d\cong \R^d\times\R^d\times\R^d\times\R^d.
\]
}
\end{remark}

%\subsection{Phase space}

In the forthcoming sections we are going to consider Lagrangian functions defined on a particular phase space that we describe now. 
\begin{definition}\label{FracPhSp}
{\it Consider the double fractional tangent bundle $\T^{\alpha}\R^d$ and the double tangent bundle $\T\R^d$. We define the {\rm fractional phase space} as the bundle product of them over $\R^d\times\R^d$, i.e.
\[
\mathfrak{T}\R^d:=\T\R^d\otimes_{\tiny{\R^d\times\R^d}}\T^{\alpha}\R^d.
\]
Thus, $\mathcal{V}_{\tilde\gamma}:=(\gamma_{x},\gamma_{y},X_{-}\gamma_{x}, X_{+}\gamma_{y},X_{-}^{\alpha}\gamma_{x}, X_{+}^{\alpha}\gamma_{y})\in\mathfrak{T}\R^d$ is locally described by
\begin{equation}\label{TotalCoord}
\mathcal{V}_{\tilde\gamma}=(x,y,\dot x,\dot y,D^{\alpha}_{-}x,D^{\alpha}_{+}y).
\end{equation}
The bundle projection $\mathcal{T}:\mathfrak{T}\R^d\Flder\R^d\times\R^d$ is defined by $\mathcal{T}(\mathcal{V}_{\tilde\gamma})=(x,y)$. }
\end{definition}
For more details on bundle products we refer to \cite{AbMa}. According to the previous development, we conclude the following (the proofs are straightforward): the fiber $\mathcal{T}^{-1}(\R^d\times\R^d)$ is a vector space with dimension $4d$; and $\mathfrak{T}\R^d$ is locally the Cartesian product of 6 copies of $\R^d$.
%\vspace{0.1cm}

 %($i$) Locally $\mathfrak{T}\R^d\cong\R^d\times\R^d\times\R^d\times\R^d\times\R^d\times\R^d$.
%\vspace{0.1cm}

 %($ii$) The fiber $\mathcal{T}^{-1}(\R^d\times\R^d)=V_{-}\R^d\times V_{+}\R^d\times V_{-}^{\alpha}\R^d\times V_{+}^{\alpha}\R^d$ is a vector space with dim$\lp\mathcal{T}^{-1}(R^d\times\R^d)\rp=4d.$ 

\section{Restricted continuous variational principle}\label{ContVarPrin}

Consider a smooth curve  $\tilde\gamma=(\gamma_x,\gamma_y)$, for $\gamma_x,\gamma_y\in C^{\infty}([a,b],\R^d)$ with fixed endpoints $x_a=\gamma_{x}(a)=\gamma_{y}(b)$ and $x_b=\gamma_{x}(b)=\gamma_{y}(a)$ in $\R^d$. 
\begin{definition}
{\it Define the path space by
\[
\begin{split}
C^{\infty}(x_a,x_b;\R^d\times\R^d)=\{\tilde\gamma\in C^{\infty}([a,b],\R^d\times\R^d)\,|\tilde\gamma(a)=(x_a,x_b),\, \tilde\gamma(b)=(x_b,x_a)\}.
\end{split}
\]}
\end{definition}

\begin{definition}\label{Variations}
{\it Given a curve $\tilde\gamma\in C^{\infty}(x_a,x_b;\R^d\times\R^d)$, a {\rm  varied curve} is a map $\Gamma:\R\times[a,b]\Flder\R^d\times\R^d$,
\[
\Gamma(\epsilon,t):=\tilde\gamma(t)+\epsilon\delta\tilde\gamma(t)
\]
for $\delta\tilde\gamma:[a,b]\Flder\R^d\times\R^d$ a {\rm  variation}, which is smooth and defined such that $\delta\tilde\gamma(a)=\delta\tilde\gamma(b)=0$.}
\end{definition}

\begin{remark}
{\it Note that  we are establishing that $\gamma_{y}(a)=x_b$ and $\gamma_{y}(b)=x_a$. This will make sense afterwards since we shall interpret $\gamma_y$ as $\gamma_{x}$ for reversed time (see Remark~\ref{rem:interpret_rev}).}
\end{remark}
From definition \ref{Variations}  it follows that $\delta\gamma_{x}(a)=\delta\gamma_{y}(b)=\delta\gamma_{x}(a)=\delta\gamma_{y}(b)=0$. Locally
\begin{equation}\label{Endpoints}
\delta x(a)=\delta y(a)=\delta x(b)=\delta y(b)=0.
\end{equation}
The variations $\delta\tilde\gamma$ induce the map $\delta\widehat{\tilde\gamma}:[a,b]\Flder\mathfrak{T}\R^d$. Using coordinates \eqref{TotalCoord} and defintion \ref{Variations}, it follows that at each $t\in[a,b]$
\begin{equation}\label{HatGamma}
\delta\widehat{\tilde\gamma}=(\delta x,\,\,\delta y,\,\,\delta\dot x,\,\, \delta\dot y,\,\, D^{\alpha}_{-}x,\,D^{\alpha}_{+}y).
\end{equation}
%See \cite{AbMa} for the equivalent in the tangent bundle case.
Next, we define the set of restricted varied curves.

\begin{definition}\label{ConsVariations}
{\it Define the set of {\rm restricted varied curves} as $\Gamma_{\eta}(\epsilon,t):=\tilde\gamma(t)+\epsilon\eta(t)$, where $\delta\tilde\gamma=\eta(t)=(\delta\gamma_x(t),\delta\gamma_x(t))$. In other words, we impose $\delta\gamma_x(t)=\delta\gamma_y(t)$ to the variations, which locally is expressed by $\delta x=\delta y.$}
\end{definition}

\begin{remark}
{\it
Observe that under definition \ref{ConsVariations} we are allowing different varied curves, i.e $\tilde\gamma_x\neq\tilde\gamma_y$, while restricting the variations, i.e. $\delta\gamma_x=\delta\gamma_y$.
}
\end{remark}

Now,  define a $C^2$ Lagrangian function $L:\mathfrak{T}\R^d\Flder\R$ and the action functional $\mathcal{A}:C^{\infty}(x_a,x_b;\R^d\times\R^d)\Flder\R$ given by
\begin{equation}\label{ActFunc}
\mathcal{A}(\tilde\gamma):=\int_a^bL(\mathcal{V}_{\tilde\gamma})\,dt.
\end{equation}
We have already all the ingredients to establish a restricted variational principle:

\begin{theorem}\label{ConsFraEL}
{\it A curve $\tilde\gamma:[a,b]\Flder \R^d\times\R^d$, subject to restricted variations $\eta(t)$ in definition \ref{ConsVariations}, is an extremal of the action $\mathcal{A}:C^{\infty}(x_a,x_b,\R^d\times\R^d)\Flder\R$ defined in \eqref{ActFunc} if it satisfies the {\rm restricted fractional  Euler-Lagrange equations}:
\begin{subequations}\label{FracEL}
\begin{align}
\frac{\der L}{\der x^i}-\frac{d}{dt}\lp\frac{\der L}{\der \dot x^i}\rp+D^{\alpha}_{-}\lp\frac{\der L}{\der D^{\alpha}_{+} y^i}\rp=0,\label{FracEL:-}\\
\frac{\der L}{\der y^i}-\frac{d}{dt}\lp\frac{\der L}{\der \dot y^i}\rp+D^{\alpha}_{+}\lp\frac{\der L}{\der D^{\alpha}_{-} x^i}\rp=0,\label{FracEL:+}
\end{align}
\end{subequations}
for $i=1,...,d.$}
\end{theorem}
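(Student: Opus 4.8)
The plan is to compute the first variation of the action $\mathcal{A}$ along a varied curve, impose the restriction $\delta x = \delta y$ from Definition~\ref{ConsVariations}, and then read off the Euler--Lagrange equations by appealing to the fundamental lemma of the calculus of variations. First I would write
\[
\left.\frac{d}{d\epsilon}\right|_{\epsilon=0}\mathcal{A}(\Gamma_{\eta}(\epsilon,\cdot))
=\int_a^b\Bigl(\frac{\der L}{\der x}\delta x+\frac{\der L}{\der y}\delta y+\frac{\der L}{\der \dot x}\delta\dot x+\frac{\der L}{\der \dot y}\delta\dot y+\frac{\der L}{\der D^{\alpha}_{-}x}D^{\alpha}_{-}\delta x+\frac{\der L}{\der D^{\alpha}_{+}y}D^{\alpha}_{+}\delta y\Bigr)\,dt,
\]
using that along the restricted varied curves the last two slots of $\mathcal{V}_{\tilde\gamma}$ vary as $D^{\alpha}_{-}\delta\gamma_x$ and $D^{\alpha}_{+}\delta\gamma_y$ respectively; here one must be a little careful, since \eqref{HatGamma} displays the \emph{unvaried} fractional slots, but differentiating $L$ composed with the curve $\epsilon\mapsto\mathcal{V}_{\Gamma_\eta(\epsilon,\cdot)}$ brings in the $\epsilon$-derivatives of those slots, which are the fractional derivatives of the variation by linearity of $D^{\alpha}_{\sigma}$.

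Next I would integrate by parts each term: the $\delta\dot x,\delta\dot y$ terms by ordinary integration by parts, producing $-\frac{d}{dt}(\der L/\der\dot x)$ and $-\frac{d}{dt}(\der L/\der\dot y)$ with boundary terms killed by \eqref{Endpoints}; and the two fractional terms by the fractional integration by parts rule \eqref{IntegrationByParts}, which converts $\int_a^b (\der L/\der D^{\alpha}_{-}x)\,D^{\alpha}_{-}\delta x\,dt$ into $\int_a^b D^{\alpha}_{+}(\der L/\der D^{\alpha}_{-}x)\,\delta x\,dt$ and likewise $\int_a^b (\der L/\der D^{\alpha}_{+}y)\,D^{\alpha}_{+}\delta y\,dt$ into $\int_a^b D^{\alpha}_{-}(\der L/\der D^{\alpha}_{+}y)\,\delta y\,dt$, the sign flip $\sigma\mapsto-\sigma$ being exactly what produces the cross terms $D^{\alpha}_{-}(\partial L/\partial D^{\alpha}_{+}y)$ in \eqref{FracEL:-} and $D^{\alpha}_{+}(\partial L/\partial D^{\alpha}_{-}x)$ in \eqref{FracEL:+}. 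One should note that \eqref{IntegrationByParts} as stated is for two smooth functions with no boundary terms; I would use it in that form, treating $\der L/\der D^{\alpha}_{\sigma}(\cdot)$ evaluated along $\tilde\gamma$ as a smooth function of $t$.

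At this point the first variation is $\int_a^b\bigl[(\text{E-L}_x)\,\delta x+(\text{E-L}_y)\,\delta y\bigr]dt$ where $\text{E-L}_x$ and $\text{E-L}_y$ are the left-hand sides of \eqref{FracEL:-} and \eqref{FracEL:+}. Now I invoke the restriction $\delta x=\delta y=:\delta q$ (Definition~\ref{ConsVariations}), so the first variation collapses to $\int_a^b\bigl[\text{E-L}_x+\text{E-L}_y\bigr]\delta q\,dt$; vanishing for all admissible $\delta q$ with $\delta q(a)=\delta q(b)=0$ gives only the \emph{sum} $\text{E-L}_x+\text{E-L}_y=0$. Since the theorem asserts that each of \eqref{FracEL:-} and \eqref{FracEL:+} separately holding is a \emph{sufficient} condition for extremality — not a necessary one — I would phrase the conclusion accordingly: if $\tilde\gamma$ satisfies the (stronger) system \eqref{FracEL}, then each bracket vanishes, hence the first variation vanishes for every restricted variation, so $\tilde\gamma$ is an extremal. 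The main subtlety to get right is precisely this logical direction (the restricted principle yields a necessary condition only on the sum, and \eqref{FracEL} is a convenient splitting of it), together with keeping the left/right labels straight through the fractional integration by parts so that the advanced and retarded operators land on the correct momentum slots.
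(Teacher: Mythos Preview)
Your approach is correct and essentially identical to the paper's: compute the first variation, integrate by parts (ordinary and fractional via \eqref{IntegrationByParts}), collapse $\delta y$ to $\delta x$ using the restriction, and observe that \eqref{FracEL} is a \emph{sufficient} condition for the resulting integrand to vanish. You also correctly flag the logical direction --- only the sum $\text{E-L}_x+\text{E-L}_y=0$ is forced, and the split into \eqref{FracEL:-}, \eqref{FracEL:+} is a convenient choice.

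One mislabeling to fix: before imposing $\delta x=\delta y$, the coefficient of $\delta x$ after integration by parts is $\frac{\der L}{\der x}-\frac{d}{dt}\frac{\der L}{\der\dot x}+D^{\alpha}_{+}\bigl(\frac{\der L}{\der D^{\alpha}_{-}x}\bigr)$, \emph{not} the left-hand side of \eqref{FracEL:-}; likewise the $\delta y$ coefficient carries $D^{\alpha}_{-}\bigl(\frac{\der L}{\der D^{\alpha}_{+}y}\bigr)$. The specific regrouping that places $D^{\alpha}_{-}(\der L/\der D^{\alpha}_{+}y)$ with the $x$-terms and $D^{\alpha}_{+}(\der L/\der D^{\alpha}_{-}x)$ with the $y$-terms --- as in \eqref{FracEL} --- is only possible \emph{after} the restriction has reduced everything to a single $\delta x$, and is the (somewhat arbitrary) splitting the paper makes in passing from its first displayed form of $\delta\mathcal A$ to \eqref{RestVar}. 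Your argument still goes through because only the sum matters, but the intermediate sentence ``At this point the first variation is $\int_a^b[(\text{E-L}_x)\delta x+(\text{E-L}_y)\delta y]\,dt$'' is not literally correct as stated.
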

\begin{proof} To find the extremals of $\mathcal{A}$ for restricted varied curves $\Gamma_{\eta}(\epsilon,t)$ we impose the usual critical condition, i.e. $\delta\mathcal{A}:=\frac{d}{d\epsilon}\mathcal{A}(\Gamma_{\eta})\big|_{\epsilon=0}=0$. Using the expression \eqref{HatGamma} we obtain:
\[
\begin{split}
\delta\mathcal{A}=&\int_a^b\bra\nabla L(\mathcal{V}_{\tilde\gamma}),\delta\widehat{\tilde\gamma}\ket\,dt=\int_a^b\Big\{\frac{\der L}{\der x}\delta x+\frac{\der L}{\der\dot x}\delta\dot x+\frac{\der L}{\der D_{-}^{\alpha}x} D_{-}^{\alpha}\delta x\\
+&\frac{\der L}{\der y}\delta x+\frac{\der L}{\der\dot y}\delta\dot x+\frac{\der L}{\der D_{+}^{\alpha}y}D_{+}^{\alpha}\delta x\Big\}\,dt\\
=&\int_{a}^b\Big\{\frac{\der L}{\der x}-\frac{d}{dt}\lp\frac{\der L}{\der \dot x}\rp+D^{\alpha}_{+}\lp\frac{\der L}{\der D^{\alpha}_{-} x}\rp\\
+&\frac{\der L}{\der y}-\frac{d}{dt}\lp\frac{\der L}{\der \dot y}\rp+D^{\alpha}_{-}\lp\frac{\der L}{\der D^{\alpha}_{+} y}\rp\Big\}\delta x\,dt+\frac{\der L}{\der \dot x}\delta x\Big|_a^b+\frac{\der L}{\der \dot y}\delta x\Big|_a^b,
\end{split}
\]
where in the second equality we have employed the constraints $\delta x=\delta y$, while in the third equality we have used integration by parts with respect to the total and fractional derivatives \eqref{IntegrationByParts}. According to the endpoint conditions \eqref{Endpoints}, all the border terms vanish leading to
\begin{eqnarray}
\delta\mathcal{A}=&&\int_{a}^b\Big\{\frac{\der L}{\der x}-\frac{d}{dt}\lp\frac{\der L}{\der \dot x}\rp+D^{\alpha}_{+}\lp\frac{\der L}{\der D^{\alpha}_{-} x}\rp\nonumber\\
&&\quad\,\,+\frac{\der L}{\der y}-\frac{d}{dt}\lp\frac{\der L}{\der \dot y}\rp+D^{\alpha}_{-}\lp\frac{\der L}{\der D^{\alpha}_{+} y}\rp\Big\}\delta x\,dt\nonumber\\
=&&\int_{a}^b\Big\{\Big[\frac{\der L}{\der x}-\frac{d}{dt}\lp\frac{\der L}{\der \dot x}\rp+D^{\alpha}_{-}\lp\frac{\der L}{\der D^{\alpha}_{+}y}\rp\Big]\,\delta x\label{RestVar}\\
&&\quad\,\,\,+\Big[\frac{\der L}{\der y}-\frac{d}{dt}\lp\frac{\der L}{\der \dot y}\rp+D^{\alpha}_{+}\lp\frac{\der L}{\der D^{\alpha}_{-}x}\rp\Big]\delta x\Big\}\,dt.\nonumber
\end{eqnarray}
Finally, from this last expression of $\delta\mathcal{A}$ is easy to see that the restricted fractional Euler-Lagrange equations \eqref{FracEL} are a sufficient condition for $\delta\mathcal{A}=0$; and the claim holds. 
\end{proof}

As mentioned above, equations \eqref{FracEL} are only sufficient conditions for the extremal curves.

\begin{theorem}\label{InvarianceEqs}
{\it The restricted fractional  Euler-Lagrange equations \eqref{FracEL} are invariant under linear change of variables $x=\Lambda\,z$ and  $y=\Lambda\,\tilde z$, where $\Lambda\in \mathbb{M}^{d\times d}(\R)$ is full rank.}
\end{theorem}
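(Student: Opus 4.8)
The statement is invariance of the system \eqref{FracEL} under a full-rank linear substitution $x=\Lambda z$, $y=\Lambda\tilde z$, so the natural strategy is to transport the Lagrangian and then recompute the restricted fractional Euler--Lagrange equations in the new variables, showing they have the same form. First I would define the pulled-back Lagrangian $\widehat{L}(z,\tilde z,\dot z,\dot{\tilde z},D^{\alpha}_{-}z,D^{\alpha}_{+}\tilde z):=L(\Lambda z,\Lambda\tilde z,\Lambda\dot z,\Lambda\dot{\tilde z},\Lambda D^{\alpha}_{-}z,\Lambda D^{\alpha}_{+}\tilde z)$; here the key elementary observations are that the ordinary derivative commutes with the constant linear map $\Lambda$ (so $\dot x=\Lambda\dot z$) and, crucially, that the Riemann--Liouville fractional derivatives $D^{\alpha}_{\pm}$ are $\R$-linear operators acting componentwise, hence $D^{\alpha}_{\sigma}(\Lambda z)=\Lambda\,D^{\alpha}_{\sigma}z$ as well. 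Thus the coordinate expression \eqref{TotalCoord} of $\mathcal{V}_{\tilde\gamma}$ transforms tensorially under $\Lambda$, and $\widehat{L}$ is genuinely a function on $\mathfrak{T}\R^d$ in the $z$-variables.

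The second step is the chain rule for the partial derivatives. Writing $x^i=\Lambda^i{}_j z^j$ etc., one gets $\dfrac{\der\widehat L}{\der z^j}=\Lambda^i{}_j\dfrac{\der L}{\der x^i}$, and likewise $\dfrac{\der\widehat L}{\der\dot z^j}=\Lambda^i{}_j\dfrac{\der L}{\der\dot x^i}$ and $\dfrac{\der\widehat L}{\der D^{\alpha}_{-}z^j}=\Lambda^i{}_j\dfrac{\der L}{\der D^{\alpha}_{-}x^i}$, with the analogous identities for the $\tilde z,\dot{\tilde z},D^{\alpha}_{+}\tilde z$ slots involving the same matrix $\Lambda$. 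Then the $z$-version of \eqref{FracEL:-}, namely
\[
\frac{\der\widehat L}{\der z^j}-\frac{d}{dt}\!\lp\frac{\der\widehat L}{\der\dot z^j}\rp+D^{\alpha}_{-}\!\lp\frac{\der\widehat L}{\der D^{\alpha}_{+}\tilde z^j}\rp
=\Lambda^i{}_j\left[\frac{\der L}{\der x^i}-\frac{d}{dt}\!\lp\frac{\der L}{\der\dot x^i}\rp+D^{\alpha}_{-}\!\lp\frac{\der L}{\der D^{\alpha}_{+}y^i}\rp\right],
\]
where again I use that $\tfrac{d}{dt}$ and $D^{\alpha}_{-}$ commute with the constant matrix $\Lambda^i{}_j$. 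Since $\Lambda$ is full rank, the left-hand side vanishes for all $j$ if and only if the bracket on the right vanishes for all $i$, i.e. \eqref{FracEL:-} holds; the same argument with $x\leftrightarrow y$, $z\leftrightarrow\tilde z$, $D^{\alpha}_{-}\leftrightarrow D^{\alpha}_{+}$ handles \eqref{FracEL:+}. One should also check that the endpoint/path-space conditions are preserved: the fixed endpoints $x_a,x_b$ simply become $\Lambda^{-1}x_a,\Lambda^{-1}x_b$, and the restriction $\delta x=\delta y$ becomes $\delta z=\delta\tilde z$ since $\delta x=\Lambda\delta z$, $\delta y=\Lambda\delta\tilde z$ and $\Lambda$ is invertible, so definition \ref{ConsVariations} is respected.

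The only genuinely non-routine point — and the one I would state carefully rather than gloss over — is the commutation $D^{\alpha}_{\sigma}(\Lambda z)=\Lambda\,D^{\alpha}_{\sigma}z$. This is immediate from the definitions \eqref{RL}, because $\Lambda$ has constant entries and the fractional derivative is defined by an integral operator that is applied to each scalar component $z^k$ separately and is $\R$-linear in its argument; I would include this as a one-line remark. Everything else is the standard ``the Euler--Lagrange operator is natural under point transformations'' computation, adapted to the extra fractional slots, and the full-rank hypothesis on $\Lambda$ is exactly what is needed to pass from ``$\Lambda^{\!\top}(\text{EL vector})=0$'' back to ``$\text{EL vector}=0$''.
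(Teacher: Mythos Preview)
Your proof is correct and follows essentially the same route as the paper: both rely on the linearity of $D^{\alpha}_{\sigma}$ (so that $D^{\alpha}_{\sigma}(\Lambda z)=\Lambda\,D^{\alpha}_{\sigma}z$), the chain rule for the partial derivatives, and the full-rank hypothesis on $\Lambda$ to pass between the two systems. The only cosmetic difference is that the paper performs the computation at the level of $\delta\mathcal{A}$ in the form \eqref{RestVar} (rewriting the integrand in $(z,\tilde z)$ with $\Lambda^{-1}\delta x$ replacing $\delta x$) rather than directly on the Euler--Lagrange operator as you do; the content is identical, and your extra check that the restriction $\delta x=\delta y$ becomes $\delta z=\delta\tilde z$ is a nice point the paper leaves implicit.
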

\begin{proof} Since the fractional derivative operators \eqref{RL} are linear (so are total time derivatives), we have that
\[
\begin{split}
\dot x=\Lambda\,\dot z,\quad\quad D^{\alpha}_{-}x=\Lambda\,D^{\alpha}_{-}z,\\
\dot y=\Lambda\,\dot{\tilde z},\quad\quad D^{\alpha}_{+}y=\Lambda\,D^{\alpha}_{+}\tilde z.
\end{split}
\]
Taking this into account and using the chain rule, it is easy to prove that $\delta\mathcal{A}$ in its \eqref{RestVar} form can be rewritten as
\[
\begin{split}
&\delta\mathcal{A}=\int_{a}^b\Big\{\Big[\frac{\der L}{\der z}-\frac{d}{dt}\lp\frac{\der L}{\der z}\rp+D^{\alpha}_{-}\lp\frac{\der L}{\der D^{\alpha}_{+}\tilde z}\rp\Big]\Lambda^{-1}\delta x\\
&\quad\quad\quad\,\,\,\,+\Big[\frac{\der L}{\der \tilde z}-\frac{d}{dt}\lp\frac{\der L}{\der \dot{\tilde z}}\rp+D^{\alpha}_{+}\lp\frac{\der L}{\der D^{\alpha}_{-}z}\rp\Big]\Lambda^{-1}\delta x\Big\}\,dt,
\end{split}
\]
where we have used that $\Lambda$ is full rank. Again $\delta x$ is arbitrary, and therefore a sufficient condition for the extremal curves is
\[
\begin{split}
\Big[\frac{\der L}{\der z}-\frac{d}{dt}\lp\frac{\der L}{\der \dot z}\rp+D^{\alpha}_{-}\lp\frac{\der L}{\der D^{\alpha}_{+} \tilde z}\rp\Big]\Lambda^{-1}=0,\\
\Big[\frac{\der L}{\der \tilde z}-\frac{d}{dt}\lp\frac{\der L}{\der \dot{\tilde z}}\rp+D^{\alpha}_{+}\lp\frac{\der L}{\der D^{\alpha}_{-} z}\rp\Big]\Lambda^{-1}=0.
\end{split}
\]
Using again that $\Lambda$ is full rank, we obtain the restricted fractional Euler-Lagrange equations \eqref{FracEL} in the new coordinates $(z,\tilde z,\dot z,\dot{\tilde z},D^{\alpha}_{-}z,D^{\alpha}_{+}\tilde z)$ and the claim holds. 
\end{proof}

Define now the Lagrangian $L(x,y,\dot x,\dot y,D^{\alpha}_{-}x,D^{\alpha}_{+}y)$ by
\begin{equation}\label{PartLagrangian}
L(x,y,\dot x,\dot y,D^{\alpha}_{-}x,D^{\alpha}_{+}y):=L_x(x,\dot x)+L_y(y,\dot y)-\ldb D^{\alpha}_{-}x,D^{\alpha}_{+}y\rdb_{_{R}},
\end{equation}
where $L_x:T_{-}\R^d\Flder\R$  and $L_y:T_{+}\R^d\Flder\R$ are $C^2$ functions, and $\ldb\cdot,\cdot\rdb_{_{R}}:V^{\alpha}_{-}\R^d\times V^{\alpha}_{+}\R^d\Flder\R$ is a symmetric bilinear form defined by
\begin{equation}\label{WedgeProd}
\begin{split}
\ldb D^{\alpha}_{-}x, D^{\alpha}_{+}y\rdb_{_{R}}:=(D^{\alpha}_{-}x)^TR\,D^{\alpha}_{+}y,
\end{split}
\end{equation}
where $R=$diag$(\rho_1,...,\rho_d)\in\mathbb{M}^{d\times d}(\R^+)$. With this new Lagrangian we obtain the following corollary to theorem \ref{ConsFraEL}.
\begin{corollary}\label{Coro}
{\it If $L$ is given by \eqref{PartLagrangian}, $\alpha=1/2$ and $L_x(x,\dot x)=\frac{1}{2}\dot x^TM\,\dot x-U(x)$, $L_y(y,\dot y)=\frac{1}{2}\dot y^TM\,\dot y-U(y)$, where $M=${\rm diag}$(m_1,...,m_d)\in\mathbb{M}^{d\times d}(\R^+)$ and $U:\R^d\Flder\R$ is a smooth function, then \eqref{FracEL} are
\begin{subequations}\label{Damping}
\begin{align}
&M\,\ddot x+R\,\dot x+\nabla U(x)=0,\\
&M\,\ddot y-R\,\dot y+\nabla U(y)=0.
\end{align}
\end{subequations}}
\end{corollary}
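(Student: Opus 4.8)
The plan is to substitute the explicit Lagrangian \eqref{PartLagrangian} with the mechanical choices for $L_x$ and $L_y$ into the restricted fractional Euler-Lagrange equations \eqref{FracEL} and compute each term. First I would record the partial derivatives: since $L_x(x,\dot x)=\frac12\dot x^TM\dot x-U(x)$ contributes $\der L/\der x^i=-\der U/\der x^i$ and $\der L/\der\dot x^i=(M\dot x)^i$, so that $\frac{d}{dt}(\der L/\der\dot x^i)=(M\ddot x)^i$; likewise for the $y$-block. The only coupling term is $-\ldb D^{1/2}_-x,D^{1/2}_+y\rdb_R=-(D^{1/2}_-x)^TR\,D^{1/2}_+y$, whose derivative with respect to $D^{1/2}_+y^i$ is $-\rho_i(D^{1/2}_-x)^i$ and with respect to $D^{1/2}_-x^i$ is $-\rho_i(D^{1/2}_+y)^i$, using that $R$ is diagonal and the form is symmetric.

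Next I would insert these into \eqref{FracEL:-}. The third term becomes $D^{1/2}_-\bigl(\der L/\der D^{1/2}_+y^i\bigr)=D^{1/2}_-\bigl(-\rho_i(D^{1/2}_-x)^i\bigr)=-\rho_i\,D^{1/2}_-D^{1/2}_-x^i$. Here is where the key identity \eqref{DoubleFracInt} enters: for $\alpha=1/2$ one has $D^{1/2}_-D^{1/2}_-x^i=\dot x^i$, so the third term collapses to $-\rho_i\dot x^i=-(R\dot x)^i$. Combining, \eqref{FracEL:-} reads $-\der U/\der x^i-(M\ddot x)^i-(R\dot x)^i=0$, i.e. $M\ddot x+R\dot x+\nabla U(x)=0$ after an overall sign change. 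For \eqref{FracEL:+} the analogous computation gives $D^{1/2}_+\bigl(-\rho_i(D^{1/2}_+y)^i\bigr)=-\rho_i\,D^{1/2}_+D^{1/2}_+y^i=+\rho_i\dot y^i$ by the second half of \eqref{DoubleFracInt}, which flips the sign of the damping term and produces $M\ddot y-R\dot y+\nabla U(y)=0$. This asymmetry in sign between the two equations is precisely the time-reversal pairing advertised in the introduction.

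The computation is essentially routine once the double half-derivative identity \eqref{DoubleFracInt} is invoked; the one point that deserves care is the bookkeeping of the cross-derivatives of the bilinear coupling term, namely that $\der/\der D^{1/2}_+y^i$ acts on $(D^{1/2}_-x)^TR\,D^{1/2}_+y$ to give the $x$-factor (and vice versa), together with the diagonality of $R$ which keeps everything componentwise. I would also note explicitly that the hypotheses of Theorem \ref{ConsFraEL} are met — $L$ is $C^2$ because $L_x,L_y$ are $C^2$ and the bilinear form is smooth — so the restricted fractional Euler-Lagrange equations genuinely apply, and that the endpoint conventions in the path space are consistent with interpreting $\gamma_y$ as the time-reversed trajectory, which is why the $y$-equation carries $-R\dot y$. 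No step here is a real obstacle; the "hard part", such as it is, is simply being careful with signs when applying the two branches of \eqref{DoubleFracInt} to the retarded and advanced blocks respectively.
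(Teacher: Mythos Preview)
Your proposal is correct and follows essentially the same route as the paper: substitute the explicit Lagrangian \eqref{PartLagrangian} into \eqref{FracEL}, compute the cross-derivatives of the bilinear coupling to produce $D^{\alpha}_{\sigma}D^{\alpha}_{\sigma}$ terms, and then invoke \eqref{DoubleFracInt} at $\alpha=1/2$ to collapse these to $\pm\dot x,\pm\dot y$. Your write-up is in fact more explicit about the sign bookkeeping and the role of $R$ than the paper's own proof, which simply states the intermediate equations and says the rest is straightforward.
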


\begin{proof}
The proof is straightforward after plugging \eqref{PartLagrangian} into \eqref{FracEL},
\[
\begin{split}
&\frac{\der L_x}{\der x}-\frac{d}{dt}\lp\frac{\der L_x}{\der \dot x}\rp-D^{\alpha}_{-}D_{-}^{\alpha}x=0\\
&\frac{\der L_y}{\der y}-\frac{d}{dt}\lp\frac{\der L_y}{\der \dot y}\rp-D^{\alpha}_{+}D^{\alpha}_{+}y=0.
\end{split}
\]
Moreover, 
replacing the particular Lagrangians $L_x,L_y$ in the claim and using  \eqref{DoubleFracInt}
 when $\alpha=1/2$, we arrive at equations \eqref{Damping}. 
\end{proof}

\begin{remark}\label{rem:interpret_rev}
We observe that if we interpret that $\gamma_y(t)$ is $\gamma_x(t)$ in reversed time, i.e. $\gamma_y(t)=\gamma_x(a+b-t)$, the system of second order differential equations \eqref{Damping} is invariant under time reversal, this is $t\Flder a+b-t$.
\end{remark}

\begin{remark}\label{EnergyIntegral}
 Define as usual the total energy at time $t$ $E(t):=E(x,\dot x)+E(y,\dot y)=\frac{1}{2}\dot x^TM\,\dot x+U(x)+\frac{1}{2}\dot y^TM\,\dot y+U(y)$. One can check that     $\frac{d}{dt}\lp E(t)-E(a+b-t)\rp=0$ under the dynamics \eqref{Damping} and the assumption $\gamma_y(t)=\gamma_x(a+b-t)$ . Consequently, the quantity $E(t)-E(a+b-t)$ becomes an {\rm energy-related} first integral of the dissipative system \eqref{Damping}.
\end{remark}

\begin{remark}
As observed above, fractional derivatives are non-local. In particular,  $D^{\alpha}_{+}y(t)$ depends on ``future'' times in the interval $(t,b]$. This violates causality of physical laws and therefore we would not be allowed to introduce such terms in the action in order to obtain the dynamical equations of a given system. However, the interpretation of $y(t)$ as $x(a+b-t)$ (Remark \ref{rem:interpret_rev}) helps overcome this issue, since ``future'' becomes ``past'' in reversed time, and consequently $D^{\alpha}_{+}y(t)$ respects causality.
\end{remark}

\section{Restricted discrete  variational principle}\label{Discrete}
Let us consider the increasing sequence of times $\{ t_k=hk\,|\,k=0,...,N\}\subset\R$ where $h$ is the fixed time step  determined by $h=(b-a)/N$. Define a discrete curve as a collection of points in $\R^d$ i.e. $\gamma_d:=\lc x_0,x_1,...,x_{N-1},x_N\rc=\lc x_k\rc_{0:N}\in\R^{(N+1)d}$. As  usual, we will consider these points as an approximation of the continuous curve at time $t_k$, namely $x_k\simeq x(t_k).$ Given $\lc z_k\rc_{0:N}$ (later on we shall particularise in $\lc x_k\rc_{0:N}$ and $\lc y_k\rc_{0:N}$) define the following sequences:
\begin{subequations}\label{AllSequencs}
\begin{align}
&\lc S\,z_k\rc_{\tiny 0:N-1},\,\,\,\quad\,\,\, S\,z_k:=\frac{z_k+z_{k+1}}{2},\label{MidpointRule}\\
&\lc \Delta_{-}z_k\rc_{\tiny 1:N},\,\,\,\,\,\,\,\quad\Delta_{-}z_k:=\frac{z_{k}-z_{k-1}}{h},\,\Delta_{-}z_0:=0, \label{DiscDerDef:1}\\
&\lc \Delta_{+}z_k\rc_{\tiny 0:N-1},\,\,\quad\Delta_{+}z_k:=-\frac{z_{k+1}-z_{k}}{h},\label{DiscDerDef:2}\\
&\lc \Delta_{-}^{\alpha}z_k\rc_{\tiny 0:N},\,\,\,\,\,\,\,\,\quad\Delta_{-}^{\alpha}z_k:=\frac{1}{h^{\alpha}}\sum_{n=0}^k\alpha_nz_{k-n},\label{DiscFracDerDef:1}\\
&\lc \Delta_{+}^{\alpha}z_k\rc_{\tiny 0:N},\,\,\,\,\,\,\,\,\quad\Delta_{+}^{\alpha}z_k:=\frac{1}{h^{\alpha}}\sum_{n=0}^{N-k}\alpha_nz_{k+n},\label{DiscFracDerDef:2}
\end{align}
\end{subequations}
where

\begin{equation}\label{AlphaDef}
\alpha_n:=\frac{-\alpha\,(1-\alpha)\,(2-\alpha)\cdot\cdot\cdot(n-1-\alpha)}{n!};\quad\alpha_0:=1.
\end{equation}
Naturally, we shall consider $\Delta_{-}x_k$ (resp. $\Delta_{+}y_k$) as an approximation of $\dot x(t_k)$ (resp. $\dot y(t_k)$) and $\Delta_{-}^{\alpha}x_k$ (resp. $\Delta_{+}^{\alpha}y_k$) as an approximation of $D^{\alpha}_{-} x(t_k)$ (resp. $D^{\alpha}_{+} y(t_k)$).  For more details on the approximation of fractional derivatives we refer to (\cite{CressonBook}, Chapter 5).
\begin{remark}
We observe that  \eqref{DiscFracDerDef:1}, \eqref{DiscFracDerDef:2} are well-defined for $k=0$ and $k=N$. Namely, straightforward computations lead to $\Delta_{-}^{\alpha}x_0=\alpha_0x_0/h^{\alpha}$ and $\Delta_{+}^{\alpha}y_N=\alpha_0y_N/h^{\alpha}$.
\end{remark}
Given two sequences $\lc F_k\rc_{\tiny 0:N}, \lc G_k\rc_{\tiny 0:N}$, the discrete derivatives \eqref{DiscDerDef:1}, \eqref{DiscDerDef:2}, and discrete fractional derivatives \eqref{DiscFracDerDef:1}, \eqref{DiscFracDerDef:1}, obey the following discrete integration by parts relationships:
\[
\begin{split}
\sum_{k=0}^{N-1}F_k(\Delta_{+}G_k)&=\sum_{k=1}^N(\Delta_{-}F_k)G_k+\frac{1}{h}F_0G_0-\frac{1}{h}F_NG_N,\\
\sum_{k=0}^{N-1}F_k(\Delta_{+}^{\alpha}G_k)&=\sum_{k=1}^N(\Delta_{-}^{\alpha}F_k)G_k+\frac{1}{h^{\alpha}}F_0G_0-\frac{1}{h^{\alpha}}F_NG_N.
\end{split}
\]
See \cite{CressonBook} for proofs. Define now the sets of discrete curves
\[
\begin{split}
C_d^x&=\lc \gamma_d^x=\lc x_k\rc_{\tiny 0:N}\in\R^{(N+1)d}\,|\,x_0=x_a,\,\,x_N=x_a\rc,\\
C_d^y&=\lc \gamma_d^y=\lc y_k\rc_{\tiny 0:N}\in\R^{(N+1)d}\,|\,\,y_0=x_b,\,\,\,y_N=x_a\rc,
\end{split}
\]
and the action sum $\mathcal{A}_d:C_d^x\times C_d^y\Flder\R$,
\begin{equation}\label{DiscActSum}
\mathcal{A}_d(\gamma_d^x,\gamma_d^y):=h\sum_{k=0}^{N-1}L(S\,x_k,S\,y_k,\Delta_{-}x_{k},\Delta_{+}y_{k},\Delta_{-}^{\alpha}x_k,\Delta_{+}^{\alpha}y_k),
\end{equation}
where we consider a Lagrangian function $L:\mathfrak{T}\R^d\Flder\R$ as defined in section \ref{ContVarPrin}. We observe that the evaluation of such a Lagrangian function at $(S\,x_k,S\,y_k,\Delta_{-}x_{k},\Delta_{+}y_{k},$\\$\Delta_{-}^{\alpha}x_k,\Delta_{+}^{\alpha}y_k)$ for sequences \eqref{AllSequencs} makes sense since $\mathfrak{T}\R^d\cong\R^d\times\R^d\times\R^d\times\R^d\times\R^d\times\R^d$. We define now the variation of discrete curves.
\begin{definition}\label{DiscreteVariations}
{\it Given a discrete curve $\tilde\gamma_d=(\gamma_d^x,\gamma_d^y)\in C_d^x\times C_d^y$, we define the set of {\rm  varied discrete curves} by
\[
\Gamma^x_{\epsilon}:=\gamma_d^x+\epsilon\,\delta\gamma^x_d,\quad
\Gamma^y_{\epsilon}:=\gamma_d^y+\epsilon\,\delta\gamma^y_d.
\]
where $\delta\gamma_d^x:=\lc \delta x_k\rc_{0:N}$, $\delta\gamma_d^y:=\lc \delta y_k\rc_{0:N}$ are the {\rm discrete variations}, defined such that 
\begin{equation}\label{DiscEndpoints}
\delta x_0=\delta x_N=0,\quad\,\delta y_0=\delta y_N=0.
\end{equation}
We define the set of {\rm restricted  varied discrete curves}, by
\[
\Gamma^x_{(\epsilon,\eta)}:=\gamma_d^x+\epsilon\,\eta_d,\quad
\Gamma^y_{(\epsilon,\eta)}:=\gamma_d^y+\epsilon\,\eta_d,
\]
where we establish $\eta_d=\delta\gamma_d^x=\delta\gamma_d^y$. Locally this means that $\delta x_k=\delta y_k$ for $k=1,...,N-1$.}
\end{definition}

In the following, we shall set 
\[
L(k):=L(S\,x_k,S\,y_k,\Delta_{-}x_{k},\Delta_{+}y_{k},\Delta_{-}^{\alpha}x_k,\Delta_{+}^{\alpha}y_k)
\]
for simplicity.
\begin{theorem}\label{DiscreteEquations}
{\it A discrete curve $\tilde\gamma_d=(\gamma_d^x,\gamma_d^y)\in C_d^x\times C_d^y$, subject to restricted discrete variations $\eta_d$ in definition \ref{DiscreteVariations}, is an extremal of the action sum $\mathcal{A}_d$ defined in \eqref{DiscActSum} if it satisfies the {\rm restricted discrete fractional Euler-Lagrange equations}:
\begin{subequations}\label{DiscFracEL}
\begin{align}
&\frac{1}{2}\lp D_1L(k)+D_1L(k-1)\rp+\Delta_{+}D_3L(k)+\Delta_{-}^{\alpha}D_6L(k)=0,\\
&\frac{1}{2}\lp D_2L(k)+D_2L(k-1)\rp\,+\Delta_{-}D_4L(k)+\Delta_{+}^{\alpha}D_5L(k)=0,
\end{align}
\end{subequations}
for  $k=1,...,N-1.$ $D_mL:=\frac{\der L}{\der z_m^i}$, where $z_m$ is the $m$-th variable in \eqref{DiscActSum}, m=1,\ldots,6, and $i=1,\ldots,d.$}
\end{theorem}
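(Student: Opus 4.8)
The plan is to mimic the continuous proof of Theorem \ref{ConsFraEL} in the discrete setting: compute $\delta\mathcal{A}_d := \frac{d}{d\epsilon}\mathcal{A}_d(\Gamma^x_{(\epsilon,\eta)},\Gamma^y_{(\epsilon,\eta)})\big|_{\epsilon=0}$, apply discrete summation by parts to shift all difference operators off the variations, collect the coefficient of $\eta_d$, and read off a sufficient condition for the variation to vanish. First I would apply the chain rule to $\mathcal{A}_d = h\sum_{k=0}^{N-1} L(k)$, using that $L(k)$ depends on the varied sequences only through $S x_k, S y_k, \Delta_- x_k, \Delta_+ y_k, \Delta_-^\alpha x_k, \Delta_+^\alpha y_k$, and that by linearity of all six discrete operators the variation of each argument is the same operator applied to $\delta\gamma_d^x$ or $\delta\gamma_d^y$. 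This yields
\[
\delta\mathcal{A}_d = h\sum_{k=0}^{N-1}\Big\{ D_1L(k)\,S\delta x_k + D_3L(k)\,\Delta_-\delta x_k + D_6L(k)\,\Delta_-^\alpha \delta x_k + D_2L(k)\,S\delta y_k + D_4L(k)\,\Delta_+\delta y_k + D_5L(k)\,\Delta_+^\alpha \delta y_k \Big\}.
\]
At this point I would impose the restriction $\delta x_k = \delta y_k = \eta_k$ from Definition \ref{DiscreteVariations}, so that every term carries the single variation $\eta_k$.

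Next I would reindex each sum so that the free index runs over $k = 1,\dots,N-1$ (the range on which $\eta_k$ is free, since $\eta_0 = \eta_N = 0$). For the midpoint terms, $\sum_{k=0}^{N-1} D_1L(k)\, S\eta_k = \sum_{k=0}^{N-1} D_1L(k)\frac{\eta_k + \eta_{k+1}}{2}$; splitting and shifting the second piece gives $\frac12\sum_{k=1}^{N-1}\big(D_1L(k) + D_1L(k-1)\big)\eta_k$ plus boundary contributions at $k=0$ and $k=N$ that involve $\eta_0,\eta_N$ and hence vanish. For the $D_3, D_6$ terms I would use the discrete summation-by-parts identities stated in the excerpt (the $\Delta_+ \leftrightarrow \Delta_-$ and $\Delta_+^\alpha \leftrightarrow \Delta_-^\alpha$ pairings), which convert $\sum_{k=0}^{N-1} F_k \Delta_+ G_k$ into $\sum_{k=1}^N (\Delta_- F_k) G_k$ up to boundary terms at $k=0,N$ — again killed by \eqref{DiscEndpoints} — so $D_3L$ picks up a $\Delta_+$ and $D_6L$ picks up a $\Delta_-^\alpha$, landing on $\eta_k$. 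Symmetrically, for $D_4L$ (paired with $\Delta_+$) and $D_5L$ (paired with $\Delta_+^\alpha$) I would run the same identities in the reverse direction so that $D_4L$ picks up a $\Delta_-$ and $D_5L$ picks up a $\Delta_+^\alpha$. One has to be slightly careful that the summation-by-parts formulas as written go from range $0{:}N-1$ to range $1{:}N$; checking that the endpoint terms at $k=N$ drop out requires $\eta_N = 0$, which we have, and that the $k=N$ term of $D_3L(k)$ etc. is harmless because it multiplies $\eta_N$ after reindexing.

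After all manipulations, $\delta\mathcal{A}_d = h\sum_{k=1}^{N-1}\big\{ [\,\tfrac12(D_1L(k)+D_1L(k-1)) + \Delta_+ D_3L(k) + \Delta_-^\alpha D_6L(k)\,] + [\,\tfrac12(D_2L(k)+D_2L(k-1)) + \Delta_- D_4L(k) + \Delta_+^\alpha D_5L(k)\,] \big\}\eta_k$. Since the $\eta_k$ for $k=1,\dots,N-1$ are arbitrary, requiring each of the two bracketed expressions to vanish separately is a sufficient condition for $\delta\mathcal{A}_d = 0$, which is exactly \eqref{DiscFracEL}; this gives the claim. I expect the main obstacle to be purely bookkeeping: getting the index ranges on the six summation-by-parts operations exactly right and verifying that every boundary term genuinely involves $\eta_0$ or $\eta_N$ (or, in the fractional case, the correctly truncated sums $\Delta_-^\alpha$ at $k=0$ and $\Delta_+^\alpha$ at $k=N$ noted in the remark) so that it vanishes — there is no deep difficulty, only the risk of an off-by-one error or a stray non-vanishing endpoint term. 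As in the continuous case, I would emphasize that \eqref{DiscFracEL} is only a \emph{sufficient} condition because we split one vanishing sum into two separately-vanishing brackets.
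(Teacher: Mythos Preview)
Your proposal is correct and follows essentially the same route as the paper's proof: chain rule on $\mathcal{A}_d$, impose the restriction $\delta x_k=\delta y_k=\eta_k$, reindex the midpoint terms and apply the discrete (fractional) summation-by-parts identities to each of the remaining four terms, drop all boundary contributions via $\eta_0=\eta_N=0$, and then read off \eqref{DiscFracEL} as a sufficient condition from arbitrariness of $\eta_k$. One small slip to fix: in your displayed chain-rule expansion you have interchanged $D_5L$ and $D_6L$ (the fifth argument of $L(k)$ is $\Delta_-^\alpha x_k$ and the sixth is $\Delta_+^\alpha y_k$), although your subsequent statements that $D_6L$ ends up under $\Delta_-^\alpha$ and $D_5L$ under $\Delta_+^\alpha$ are consistent with the correct pairing, so the final answer is unaffected.
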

\begin{proof} The extremals will satisfy the condition $\delta\mathcal{A}_d:=\frac{d}{d\epsilon}\mathcal{A}_d(\Gamma^x_{(\eta,\epsilon)},\Gamma^y_{(\eta,\epsilon)})\big|_{\epsilon=0}=0.$
\[
\begin{split}
&\delta\mathcal{A}_d=h\sum_{k=0}^{N-1}\big\{D_1L(k)S\,\delta x_k+D_2L(k)S\,\delta x_k+D_3L(k)\Delta_{-}\delta x_{k}\\
&\quad\quad\quad\quad+D_4L(k)\Delta_{+}\delta x_k+D_5L(k)\Delta_{-}^{\alpha}\delta x_k+D_6L(k)\Delta_{+}^{\alpha}\delta x_k\big\}
\end{split}
\]
where we have imposed the restricted variations $\delta x_k=\delta y_k$ and have taken into account that $\delta \lp S\,x_k\rp=S\,\delta x_k$, $\delta (\Delta_{-} x_{k})=\Delta_{-}\delta x_{k}$, $\delta (\Delta_{+} x_{k})=\Delta_{+}\delta x_{k},$  $\delta (\Delta_{-}^{\alpha}x_{k})=\Delta_{-}^{\alpha}\delta x_{k}$, $\delta (\Delta_{+}^{\alpha}x_{k})=\Delta_{+}^{\alpha}\delta x_{k}$ according to \eqref{AllSequencs} and  \eqref{DiscEndpoints}. We display the treatment of terms 1,3,5 (for terms 2,4,6 we use equivalent techniques).
\[
\begin{split}
\sum_{k=0}^{N-1}D_1L(k)S\,\delta x_k&=\sum_{k=0}^{N-1}\frac{1}{2}D_1L(k)\lp\delta x_k+\delta x_{k+1}\rp\\
&=\sum_{k=1}^{N-1}\frac{1}{2}\lp D_1L(k)+D_1L(k-1)\rp\delta x_k;
\end{split}
\]
where in the second equality we have rearranged the summation index and taken into account the endpoint conditions $\delta x_0=\delta x_N=0.$
\[
\begin{split}
&\sum_{k=0}^{N-1}D_3L(k)\Delta_{-}\delta x_{k}=\sum_{k=1}^{N}D_3L(k)\Delta_{-}\delta x_{k}\\
&\quad=\sum_{k=0}^{N-1}\Delta_{+}D_3L(k)\delta x_{k}=\sum_{k=1}^{N-1}\Delta_{+}D_3L(k)\delta x_{k};
\end{split}
\]
in the first equality we have used $\delta x_0=\delta x_N=0$ and \eqref{DiscDerDef:1}, in the second equality employed the discrete integration by parts  and, in the third,  $\delta x_0=0.$ 
\[
\begin{split}
&\sum_{k=0}^{N-1}D_5L(k)\Delta_{-}^{\alpha}\delta x_{k}=\sum_{k=1}^{N}D_5L(k)\Delta_{-}^{\alpha}\delta x_{k}\\
&\quad\quad=\sum_{k=0}^{N-1}\Delta_{+}^{\alpha}D_5L(k)\delta x_{k}=\sum_{k=1}^{N-1}\Delta_{+}^{\alpha}D_5L(k)\delta x_{k};
\end{split}
\]
in the first equality we have employed $\delta x_0=\delta x_N=0$, in the second used the discrete fractional integration by parts, and in the third $\delta x_0=0$. Gathering all terms together and ordering them conveniently we arrive to
\[
\begin{split}
&\delta\mathcal{A}_d=h\sum_{k=1}^{N-1}\Big(\Big\{\frac{1}{2}\lp D_1L(k)+D_1L(k-1)\rp\\
&\quad\quad\quad+\Delta_{+}D_3L(k)+\Delta_{-}^{\alpha}D_6L(k)\Big\}\delta x_k\\
&\quad\quad\quad\quad\quad+\Big\{\frac{1}{2}\lp D_2L(k)+D_2L(k-1)\rp\\
&\quad\quad\quad\quad\quad\quad\quad+\Delta_{-}D_4L(k)+\,\Delta_{+}^{\alpha}D_5L(k)\Big\}\delta x_k\Big).
\end{split}
\]
Taking into account that $\delta x_k$ are arbitrary for $k=1,...,N-1$ we observe that  the discrete fractional Euler-Lagrange equations \eqref{DiscFracEL} are sufficient conditions for the extremal discrete curves. 
\end{proof}

Let us consider the Lagrangian function \eqref{PartLagrangian} and $L_x,L_y$ given in corollary \ref{Coro}, i.e.
\begin{equation}\label{DiscLagFin}
\begin{split}
L(k)=\frac{1}{2}&(\Delta_{-}x_{k})^TM\,\Delta_{-}x_{k}+\frac{1}{2}(\Delta_{+}y_{k})^TM\,\Delta_{+}y_{k}\\
&\quad\quad\quad-U(S\,x_k)-U(S\,y_k)-\ldb\Delta_{-}^{\alpha}x_k, \Delta_{+}^{\alpha}y_k\rdb_{_{R}}.
\end{split}
\end{equation}
Straightforward computations provide:
\[
\begin{split}
&D_1L(k)=-\nabla U(S\,x_k),\,\,\, D_3L(k)=M\Delta_{-}x_k,\,\,\,\,D_5L(k)=-R\Delta^{\alpha}_{+}y_k,\\
&D_2L(k)=-\nabla U(S\,y_k),\,\,\,\, D_4L(k)=M\Delta_{+}y_k,\,\,\,\,D_6L(k)=-R\Delta^{\alpha}_{-}x_k.
\end{split}
\]
Plugging these expressions into \eqref{DiscFracEL} we obtain
\[
\begin{split}
-\frac{1}{2}\nabla U(S\,x_k)-\frac{1}{2}\nabla U(S\,x_{k-1})+M\Delta_{+}\Delta_{-}x_{k}-R\Delta_{-}^{\alpha}\Delta_{-}^{\alpha}x_k=0,\\
-\frac{1}{2}\nabla U(S\,y_k)-\frac{1}{2}\nabla U(S\,y_{k-1})+M\Delta_{-}\Delta_{+}y_k-R\Delta_{+}^{\alpha}\Delta_{+}^{\alpha}y_k=0,
\end{split}
\]
for $k=1,...,N-1$. Employing \eqref{AllSequencs}, changing sings and reordering terms these equations are
\begin{subequations}\label{DiscEqs}
\begin{align}
&M\frac{x_{k+1}-2x_k+x_{k-1}}{h^2}\nonumber\\
&\quad\quad+R\frac{1}{h^{2\alpha}}\sum_{n=0}^k\alpha_n\sum_{p=0}^{k-n}\alpha_px_{k-n-p}\\
&\quad\quad\quad\quad\quad+\frac{1}{2}\nabla U\lp\frac{x_{k+1}+x_k}{2}\rp+\frac{1}{2}\nabla U\lp\frac{x_{k}+x_{k-1}}{2}\rp=0,\nonumber\\
&M\frac{y_{k+1}-2y_k+y_{k-1}}{h^2}\nonumber\\
&\quad\quad+R\frac{1}{h^{2\alpha}}\sum_{n=0}^{N-k}\alpha_n\sum_{p=0}^{N-k-n}\alpha_py_{k+n+p}\\
&\quad\quad\quad\quad\quad+\frac{1}{2}\nabla U\lp\frac{y_{k+1}+y_k}{2}\rp+\frac{1}{2}\nabla U\lp\frac{y_{k}+y_{k-1}}{2}\rp=0.\nonumber
\end{align}
\end{subequations}
The following lemma will help interpret equations \eqref{DiscEqs}.
\begin{lemma}\label{LemmaDisc}
{\it For $\alpha=1/2$ and $k=1,...,N-1$,}
\[
\sum_{n=0}^k\alpha_n\sum_{p=0}^{k-n}\alpha_px_{k-n-p}=x_k-x_{k-1}.
\]
\end{lemma}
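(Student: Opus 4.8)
The plan is to recognise that the numbers $\alpha_n$ defined in \eqref{AlphaDef} are exactly the Taylor coefficients of the binomial function $(1-z)^{\alpha}$. Comparing \eqref{AlphaDef} with the generalised binomial series and rearranging the product $(-\alpha)(1-\alpha)\cdots(n-1-\alpha)=(-1)^n\,\alpha(\alpha-1)\cdots(\alpha-n+1)$, one gets $\alpha_n=(-1)^n\binom{\alpha}{n}$, hence the formal power series identity $\sum_{n\geq 0}\alpha_n z^n=(1-z)^{\alpha}$. I would record this as the first step.

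Next I would rewrite the left-hand side of the lemma as a Cauchy convolution. Setting $j=n+p$ and collecting, for each fixed $j\in\{0,\dots,k\}$, the pairs $(n,p)$ with $n+p=j$,
\[
\sum_{n=0}^k\alpha_n\sum_{p=0}^{k-n}\alpha_p\,x_{k-n-p}=\sum_{j=0}^k\beta_j\,x_{k-j},\qquad \beta_j:=\sum_{n=0}^j\alpha_n\alpha_{j-n},
\]
so that $\{\beta_j\}$ is the self-convolution of $\{\alpha_n\}$ and its generating function is $\bigl((1-z)^{\alpha}\bigr)^2=(1-z)^{2\alpha}$.

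The last step is the specialisation $\alpha=1/2$: then $(1-z)^{2\alpha}=1-z$, so $\beta_0=1$, $\beta_1=-1$, and $\beta_j=0$ for every $j\geq 2$. Substituting into the reindexed sum, and using the hypothesis $k\geq 1$ so that both the $j=0$ and $j=1$ terms are present (in particular the endpoint case $k=1$, where $x_{k-1}=x_0$, is covered), yields $\sum_{j=0}^k\beta_j x_{k-j}=x_k-x_{k-1}$, which is the assertion. If one prefers to avoid power series, the values of $\beta_j$ follow equally from the Vandermonde convolution $\sum_{n=0}^j\binom{1/2}{n}\binom{1/2}{j-n}=\binom{1}{j}$, since then $\beta_j=(-1)^j\binom{1}{j}$. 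The only point requiring any care is the bookkeeping in the reindexing of the double sum; the rest is routine, and I expect no genuine obstacle: the statement is the exact discrete counterpart of the continuous identity \eqref{DoubleFracInt}, $D^{1/2}_{-}D^{1/2}_{-}f=\tfrac{d}{dt}f$, and the proof is just the discrete shadow of $\bigl((1-z)^{1/2}\bigr)^2=1-z$.
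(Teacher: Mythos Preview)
Your proof is correct and takes a genuinely different route from the paper's. Both arguments first collapse the double sum into a single sum over $j=n+p$ with coefficients $\beta_j=\sum_{n=0}^{j}\alpha_n\alpha_{j-n}$ (the paper writes these as $\beta_0^{j}=2\alpha_j+\sum_{i=1}^{j-1}\alpha_i\alpha_{j-i}$, which is the same quantity since $\alpha_0=1$), and both then need $\beta_j=0$ for every $j\geq 2$ when $\alpha=1/2$. The paper obtains this by an explicit expansion followed by an induction on $j$ using the recursion for the $\alpha_n$; the bookkeeping in its inductive step is somewhat opaque. You instead identify $\alpha_n=(-1)^n\binom{\alpha}{n}$, recognise $\sum_{n\geq 0}\alpha_n z^n=(1-z)^{\alpha}$, and read off $\sum_{j\geq 0}\beta_j z^j=(1-z)^{2\alpha}=1-z$, which eliminates all $\beta_j$ with $j\geq 2$ at once (or, equivalently, appeal to Vandermonde). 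Your generating-function argument is shorter, more conceptual, and makes transparent why the lemma is the discrete analogue of $D_{-}^{1/2}D_{-}^{1/2}=d/dt$; the paper's bare-hands induction avoids any appeal to power series but is correspondingly more laborious.
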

\begin{proof} According to \eqref{AlphaDef} we have that $\alpha_0=1$ and $\alpha_1=-1/2$ for $\alpha=1/2$, which leads, after expanding the summations, to
\[
\begin{split}
\sum_{n=0}^k\alpha_n\sum_{p=0}^{k-n}&\alpha_px_{k-n-p}=x_k-x_{k-1}\\
&+\sum_{n=2}^{k}2\alpha_nx_{k-n}+\sum_{n=1}^k\alpha_n\sum_{p=1}^{k-n}\alpha_px_{k-n-p}.
\end{split}
\]
The claim automatically holds for $k=1$. For $k\geq 2$, arranging the sum indices we see that the previous expression can be rewritten as
\begin{eqnarray}
&&\sum_{n=0}^k\alpha_n\sum_{p=0}^{k-n}\alpha_px_{k-n-p}=x_k-x_{k-1}\nonumber\\
&&\quad\quad\quad+\sum_{s=2}^{r}\beta_0^sx_{k-s}+\sum_{l=0}^{k-(r+1)}\beta_l^{r+1}x_{k-(r+1)-n}\label{SumExp}\\
&&\quad\quad\quad\quad\quad\quad\quad\quad\quad\quad\quad+\sum_{n=r}^k\alpha_n\sum_{p=1}^{k-n}\alpha_px_{k-n-p},\nonumber
\end{eqnarray}
where for a fixed $k$ we set $r=k-1$ and 
\begin{equation}\label{Betas}
\beta_l^j=2\alpha_{l+j}+\sum_{i=1}^{j-1}\alpha_i\alpha_{l+j-i},
\end{equation}
(it is apparent that $j$ is not a power but a superindex). For a fixed $k=\tilde k$, \eqref{SumExp} acquires the form
\[
\begin{split}
&\sum_{n=0}^{\tilde k}\alpha_n\sum_{p=0}^{\tilde k-n}\alpha_px_{\tilde k-n-p}=x_{\tilde k}-x_{\tilde k-1}\\
&+\beta_0^2x_{\tilde k-2}+\beta_0^3x_{\tilde k-3}+\cdot\cdot\cdot+\beta_0^{\tilde k-2}x_2+\beta_0^{\tilde k-1}x_1+\beta_0^{\tilde k}x_0.
\end{split}
\]
According to this, it is enough to prove that $\beta_0^j=0$ for any $j$, for which we proceed by induction. From \eqref{Betas} and \eqref{AlphaDef}, it follows that $\beta^2_0=2\alpha_2+\alpha_1\alpha_1$, which vanishes for $\alpha=1/2$. Taking this as the first induction step, it is enough to prove that $\beta^{j+1}_0=0$ assuming that $\beta^j_0=0$.
\[
\begin{split}
&\beta^{j+1}_0=2\alpha_{j+1}+\sum_{i=1}^{j}\alpha_i\alpha_{j+1-i}=2\alpha_{j+1}+\sum_{i=1}^{r-1}\alpha_i\alpha_{r-i}\\
&=2\alpha_{j+1}-2\alpha_r+2\alpha_r+\sum_{i=1}^{r-1}\alpha_i\alpha_{r-i}=2\alpha_{j+1}-2\alpha_r=0,
\end{split}
\]
where we have set $r=j+1.$ Hence the claim follows.
\end{proof}

Using similar arguments, one can prove that
\[
\sum_{n=0}^{N-k}\alpha_n\sum_{p=0}^{N-k-n}\alpha_py_{k+n+p}=-(y_{k+1}-y_k),
\]
for $k=1,...,N-1$. Therefore, in the case $\alpha=1/2$ we observe that \eqref{DiscEqs} becomes
\[
\begin{split}
&M\frac{x_{k+1}-2x_k+x_{k-1}}{h^2}+R\frac{x_{k}-x_{k-1}}{h}\\
&\quad\quad+\frac{1}{2}\nabla U\lp\frac{x_{k+1}+x_k}{2}\rp+\frac{1}{2}\nabla U\lp\frac{x_{k}+x_{k-1}}{2}\rp=0,\\
&M\frac{y_{k+1}-2y_k+y_{k-1}}{h^2}-R\frac{y_{k+1}-y_k}{h}\\
&\quad\quad+\frac{1}{2}\nabla U\lp\frac{y_{k+1}+y_k}{2}\rp+\frac{1}{2}\nabla U\lp\frac{y_{k}+y_{k-1}}{2}\rp=0.
\end{split}
\]
These equations provide a classical midpoint rule variational integrator (see e.g.~\cite{MaWe01}) for the conservative part of equations \eqref{Damping}, i.e.~for kinetic and potential energy terms, whereas the velocity-dependent damping term is approximated by backward and forward difference operators for the equations in forward and backward time, respectively. Equivalent equations may be obtained from the discretisation of the Lagrange-d'Alembert principle (see \cite{MaWe01,ObJuMa10}). However, this principle implies equating the variaton of the action to the work done by the external forces, which is not the point of our approach since we obtain such external forces from the restricted variational principle.

\section{Conclusions}
In summary, we have obtained the so-called {\it restricted fractional Euler-Lagrange equations} (both continuous and discrete) from a restricted variational principle defined on a $\alpha-$fractional phase space and curves evolving on  real space. In the case of $\alpha=1/2$ and mechanical Lagrangian functions, i.e. kinetic minus potential energy, these equations model a dissipative mechanical system, and present time symmetry and invariance under linear change of variables. As one could expect, the discrete equations are a meaningful discretisation of the continuous ones, giving rise to the construction and application of variational integrators for non-conservative mechanical systems.

This work establishes solid grounds for future developments. We are in position to construct a canonical transformation providing meaningful fractional Hamilton equations. Moreover, these Hamilton equations would facilitate the introduction of controlled external forces and the development of a {\it Fractional Pontryagin Maximum Principle} leading to necessary optimality conditions for dissipative mechanical systems. %The existence of a energy-related conserved quantity (remark \ref{EnergyIntegral}) suggests the applicability of BEA  to our framework, leading to a better understanding of the numerical behaviour of the variationally obtained fractional discrete schemes. 
Needless to say, the numerical behaviour of the obtained schemes shall be tested by means of proper implementations, as well as compared to state-of-the-art numerical integrators for dissipative systems.
\medskip\medskip

{\bf Acknowledgments}: This work has been funded by the EPSRC project: `'Fractional Variational Integration and Optimal Control''; ref: EP/P020402/1.

\end{document}